\newcommand\restr[2]{{
  \left.\kern-\nulldelimiterspace 
  #1 
  \vphantom{\big|} 
  \right|_{#2} 
  }}
\def\moverlay{\mathpalette\mov@rlay}
\def\mov@rlay#1#2{\leavevmode\vtop{%
   \baselineskip\z@skip \lineskiplimit-\maxdimen
   \ialign{\hfil$\m@th#1##$\hfil\cr#2\crcr}}}
\newcommand{\charfusion}[3][\mathord]{
    #1{\ifx#1\mathop\vphantom{#2}\fi
        \mathpalette\mov@rlay{#2\cr#3}
      }
    \ifx#1\mathop\expandafter\displaylimits\fi}
\newcommand{\size}[1]{\mathrm{size}}
\newcommand{\set}[2][ ]{\{#2 \ifthenelse{\equal{#1}{ }}{ }{~|~#1}\}}
\newcommand{\comment}[1]{}
\newcommand{\seepage}[2][See]{
    \marginnote{
        \scriptsize {#1} p.~\pageref{#2}
    }
}
\newcommand{\reuse}[1]{
	\expandafter\stepcounter{#1_help}
    \expandafter\label{#1_app}
    \csname#1\endcsname*
}
\def\polylog{\operatorname{polylog}}
\def\Set#1{\left\{ #1 \right\}}
\def\Abs#1{\left| #1 \right|}
\def\Paren#1{\left( #1 \right)}		
\newcommand{\isocut}{isolating sets}
\newcommand{\blikstad}[1]{{\bf \color{magenta} [JOAKIM: #1]}}
\newcommand{\sagnik}[1]{{\bf \color{green} SAGNIK: #1}}
\newcommand{\danupon}[1]{{\bf \color{blue} DN: #1}}
\newcommand{\danupon}[1]{}
\newcommand{\sagnik}[1]{}
\newcommand{\blikstad}[1]{}
\title{A Note on Isolating Cut Lemma for Submodular Function Minimization}
\author{Sagnik Mukhopadhyay \thanks{KTH Royal Institute of technology, Sweden, \texttt{sagnik@kth.se}} \and Danupon Nanongkai \thanks{KTH Royal Institute of technology, Sweden, \texttt{danupon@kth.se}}}
\date{}
\begin{document}


\maketitle 
\begin{abstract}
It has been observed independently by many researchers that  the isolating cut lemma of Li and Panigrahi [FOCS 2020] can be easily extended to obtain new algorithms for finding the non-trivial minimizer of a symmetric submodular function and solving the hypergraph minimum cut problem. This note contains these observations. 
\end{abstract}



\pagenumbering{arabic}
\section{Introduction}
Finding the minimum cut of a graph is a fundamental graph problem that has been at the center of research focus for many decades. In a recent result \cite{LiP2020}, Li-Panigrahi showed a deterministic algorithm of complexity\footnote{We use $m$ and $n$ to denote the number of edges and the number of vertices respectively.} $O(m \cdot \min(\sqrt m, n^{2/3}))$ for solving the minimum cut problem on weighted graphs. Their algorithm uses the max-flow algorithm (or, equivalently, the $st$-minimum cut algorithm) as a black box, and they show how to design an algorithm for global minimum cut for weighted graphs using $\polylog n$ many calls to a max-flow subroutine. The key insight in their work is the \textit{isolating cut lemma}.

Since their work, it has been observed by many researchers that the  \textit{isolating cut lemma} of \cite{LiP2020} can be extended to any arbitrary symmetric submodular function $f$. As such, this lets us obtain algorithms for finding a non-trivial minimizer for symmetric submodular functions with $\polylog n$ calls to an algorithm for submodular function minimization. This note contains formal arguments and statements that can
be used to plug-in the best possible results for submodular function minimization in the future. We also sketch the application of such generalization to the query complexity of submodular function minimization and time complexity of finding a minimum cut in hypergraphs. We note that a recent work of Chekuri and Quanrud \cite{chekuri2021} has significant overlap with the content of this paper. 

\paragraph{Organization.} In Section \ref{sec:extension}, we extend the notion of \textit{isolating cut} to symmetric submodular functions and state a general theorem as to how to compute non-trivial minimizer of such a function using blackbox access to an algorithm for submodular function minimization. In Section \ref{sec:sfm-query}, we apply the result from Section \ref{sec:extension} to compute the query complexity of finding a non-trivial minimizer. Finally, in Section \ref{sec:hg-mincut}, we show its application to finding minimum cut in a hypergraph.

\section{Extension to submodular function minimization} \label{sec:extension}
Consider a symmetric submodular function $f: 2^V \to \bbR$. We are interested in finding a non-trivial minimizer of $f$ which has size at most $n/2$. Note that, because $f$ is symmetric, such a minimizer will always exists. From now onwards, we drop the specification on  the size bound of the non-trivial minimizer---we assume that these are implied throughout. In this section, we prove the following theorem.

\begin{theorem}\label{thm:sfm-nontrivial}
Let $f$ be a symmetric submodular function and $k$ be a $1.5$-approximation\footnote{The constant $1.5$ is arbitrary. Any constant strictly less than $2$ is enough for our purpose.} of the size of its non-trivial minimizer. Also, assume that there is an algorithm $\cA$ for submodular function minimization.
Then there is a randomized algorithm $\cB$ for computing the non-trivial minimizer of $f$ that \begin{enumerate}
    \item makes $O(\log n\log(n/k))$ many calls to $\cA$ with ground set size $O(|V|)$, and
    \blikstad{can the $\log(n/k)$ factor be removed by noting that there is a non-trivial minimizer of size $\ge n/2$, so we may assume that $k = \Omega(n)$? actually, picking $k=n/2$ always works, right? so no need of assuming uou have a constant approx}
    \item  makes an additional $O(\log n)$ many groups of $O(n/k)$ many calls to $\cA$ such that each group of calls is made with ground set $U_1, \cdots, U_{|R|}$ respectively and $\sum_i \Abs{U_j} \leq \Abs V$.
\end{enumerate}
$\cB$ outputs a subset of $V$ which is the non-trivial minimizer with high probability\footnote{By high probability, we mean with probability $1 - n^{-c}$ for some constant $c \geq 1$.}.
\end{theorem}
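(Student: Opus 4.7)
The plan is to lift the Li--Panigrahi framework to the symmetric submodular setting. Let $S^{*}$ denote the unknown non-trivial minimizer, with $s := |S^{*}| \in [k/1.5, 1.5k]$. Sample $R \subseteq V$ by including each element independently with probability $p = 1/k$; then
\[
\Pr\bigl[|R \cap S^{*}| = 1\bigr] \;=\; s p (1-p)^{s-1} \;=\; \Theta(1),
\]
so $O(\log n)$ independent repetitions suffice to guarantee with high probability that at least one sample $R$ ``isolates'' exactly one element of $S^{*}$. Both items will then pick up this outer $\log n$ factor, and the final answer is obtained by taking the best candidate across all repetitions.

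Fix one sample $R$ and, in a \emph{global round}, assign the elements of $R$ distinct binary labels of length $\ell = \Ceil{\log |R|} = O(\log(n/k))$. For each bit $i \in [\ell]$, let $A_i = \{u \in R : \mathrm{bit}_i(u) = 0\}$ and $B_i = R \setminus A_i$. The constrained minimization
\[
C_i \in \arg\min \bigl\{\, f(S) \;:\; A_i \subseteq S \subseteq V \setminus B_i \,\bigr\}
\]
is equivalent to minimizing the submodular function $T \mapsto f(T \cup A_i)$ on ground set $V \setminus (A_i \cup B_i)$ of size at most $|V|$, hence a single call to $\cA$. Over all repetitions this is $O(\log n \log(n/k))$ calls on ground sets of size $O(|V|)$, matching item~1.

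The \emph{local round} rests on the following structural lemma: for each $v \in R$ set
\[
D_v \;=\; \bigcap_{i:\, \mathrm{bit}_i(v) = 0} C_i \;\cap\; \bigcap_{i:\, \mathrm{bit}_i(v) = 1} (V \setminus C_i).
\]
The $\{D_v\}_{v \in R}$ are pairwise disjoint and each satisfies $D_v \cap R = \{v\}$, so $\sum_{v \in R} |D_v| \le |V|$. The key claim is that there exists a minimizer $S_v$ of $\{f(S) : v \in S,\ S \cap R = \{v\}\}$ with $S_v \subseteq D_v$. Starting from any such $S$, I iterate over the bits and submodularly uncross $S$ with $C_i$. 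When $\mathrm{bit}_i(v) = 0$, the union $C_i \cup S$ is admissible for the $C_i$-problem, so submodularity yields $f(C_i \cap S) \le f(S)$ and $S$ can be replaced by $C_i \cap S$. When $\mathrm{bit}_i(v) = 1$, the same argument is applied to $V \setminus C_i$, which by $f(V \setminus C_i) = f(C_i)$ is itself an optimum for the ``reversed'' $(B_i, A_i)$ bipartition problem, and $S$ is replaced by $(V \setminus C_i) \cap S$. After iterating, $S_v$ lies inside $D_v$ and so can be computed by a single call to $\cA$ on the ground set $D_v \setminus \{v\}$ (with $v$ forced in). These $|R| = O(n/k)$ disjoint-ground-set calls per repetition are precisely item~2.

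Finally, output the candidate $S_v$ of smallest $f$-value across all repetitions and all $v$, replacing it by its complement if $|S_v| > n/2$. With high probability some repetition has $|R \cap S^{*}| = 1$, and for the unique $v \in R \cap S^{*}$ the set $S^{*}$ itself is admissible for the $v$-isolating problem, forcing $f(S_v) \le f(S^{*})$ and hence equality. The main technical obstacle is the structural lemma: the ``$\mathrm{bit} = 1$'' direction of the uncrossing is not a direct consequence of submodularity and genuinely requires the symmetry $f(S) = f(V \setminus S)$---which is precisely why the theorem is restricted to symmetric submodular functions.
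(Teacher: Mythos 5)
Your proposal follows essentially the same route as the paper: sample $R$ with probability $1/k$, repeat $O(\log n)$ times, run the binary-label ``global round'' of $\log|R|$ constrained minimizations, intersect the sides containing $v$ to get disjoint regions, uncross via submodularity plus symmetry to show an optimal isolating set survives inside its region, and finish with one call per region on disjoint ground sets. The uncrossing argument, the call counts, and the disjointness bookkeeping all match the paper's Theorem~\ref{thm:sfm-isolating} and its proof.

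There is one genuine gap in the correctness argument. You only establish $\Pr[|R\cap S^{*}|=1]=\Theta(1)$, and then conclude $f(S_v)\le f(S^{*})$ ``and hence equality.'' The reverse inequality $f(S_v)\ge f(S^{*})$ requires $S_v$ to be a \emph{non-trivial} set, which fails when $R\subseteq S^{*}$: if $R=\{v\}$ (or more generally $R\cap(V\setminus S^{*})=\emptyset$ is the bad case to worry about when $|R|=1$), the set $V$ is admissible for the $v$-isolating problem, and since $f(V)=f(\emptyset)\le f(S)$ for every $S$ by symmetry and submodularity, the local round may return $S_v=V$ with a strictly smaller $f$-value than $f(S^{*})$. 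Taking the global minimum over all repetitions would then output a trivial set, and your post-processing step of complementing when $|S_v|>n/2$ does not help, since the complement of $V$ is $\emptyset$. The paper closes this hole with a second probabilistic claim: using $k\le 3n/4$ (which follows from the $1.5$-approximation and $|S^{*}|\le n/2$), the complement $V\setminus S^{*}$ has at least $n/4$ elements, so $R$ intersects it with constant probability, independently of the event $|R\cap S^{*}|=1$; conditioning on both events, every admissible set for the $v$-isolating problem is a proper non-empty subset and the equality $f(S_v)=f(S^{*})$ goes through. You need either this additional claim or an explicit rule discarding trivial candidates together with an argument that some repetition produces a non-trivial candidate achieving $f(S^{*})$.
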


\begin{remark}
Without a constant approximation of the size of the non-trivial minimizer, the algorithm $\cA$ of \Cref{thm:sfm-nontrivial} still outputs a set. But there is no guarantee that it is a non-trivial minimizer.
\blikstad{One can try $k$ as powers of $2$ if one does not have a constant approximation, at the cost of an extra log-factor?}
\end{remark}

The rest of the section proves Theorem \ref{thm:sfm-nontrivial}. Extending the notion of minimum isolating cuts for the minimum cut problem \cite{LiP2020}, we define the minimum isolating set for $f$ in the following way.

\begin{definition}[Minimum \isocut] \label{def:isocut}
Consider a subset $R \subseteq V$ of the ground set $V$. The \textit{minimum \isocut} w.r.t. $R$ is a family of set $\{S_v: v \in R\}$ such that:\begin{enumerate}
    \item For every element $v \in R$, $S_v \cap R = \{v\}$, and \label{itm:uniq-intsec}
    \item Among all sets $S'_v$ with $S'_v \cap R = \{v\}$, $f(S_v)$ is the minimum and the set $S_v$ is inclusion-wise minimal among those sets. \label{itm:minimizer}
\end{enumerate}
\end{definition}

\noindent
The main technical theorem of this note is the following which is a generalization of Theorem II.2 of \cite{LiP2020}.
\begin{theorem}[Isolating set lemma] \label{thm:sfm-isolating}
Let $f$ be a symmetric submodular function and consider a subset $R \subseteq V$ of the ground set $V$. Also, assume that there is an algorithm $\cA$ for submodular function minimization.
Then there is an algorithm $\cB$ for computing minimum isolating sets w.r.t. $R$ that \begin{enumerate}
    \item makes $\log |R|$ many calls to $\cA$ with ground set size $O(|V|)$, and
    \item makes an additional $|R|$ many calls to $\cA$ with ground set $U_1, \cdots, U_{|R|}$ respectively such that $U_j's$ are pairwise disjoint and $\sum_i \Abs{U_j} \leq \Abs V$.
\end{enumerate}
\end{theorem}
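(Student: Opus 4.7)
The plan is to adapt the $st$-cut argument of Li--Panigrahi \cite{LiP2020} by replacing their max-flow subroutine with constrained symmetric submodular minimization and their cut-function uncrossing with the submodular and posimodular inequalities for symmetric $f$. I would assign each $v\in R$ a distinct binary label of length $\lceil \log|R|\rceil$ and, for each bit position $i$, partition $R = R_0^i \cupdot R_1^i$ by the $i$-th bit. Then I would compute an inclusion-wise minimal minimizer $C_i$ of $f(S)$ subject to $R_0^i\subseteq S\subseteq V\setminus R_1^i$. This constrained problem reduces to unconstrained minimization of the submodular function $g_i(T):=f(T\cup R_0^i)$ on the ground set $V\setminus(R_0^i\cup R_1^i)$ of size at most $|V|$, so a single call to $\cA$ suffices; since minimizers of a submodular function form a lattice, the inclusion-wise minimal minimizer is unique and can be recovered by standard techniques. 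This supplies the $\log|R|$ calls of item~(1).

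The central claim is that for every $v\in R$ and every bit $i$, the minimum isolating set $S_v$ lies entirely on the ``$b(v)_i$-side'' of $C_i$. When $b(v)_i=0$ I have $v\in C_i$ and $S_v\cap R_1^i=\emptyset$, so $S_v\cap C_i$ is a feasible $v$-isolator and $S_v\cup C_i$ is a feasible $(R_0^i,R_1^i)$-separator; plugging into
\[
f(C_i)+f(S_v)\ \ge\ f(C_i\cap S_v)+f(C_i\cup S_v)
\]
and combining with the optimality of $C_i$ and $S_v$ forces equality, after which the inclusion-wise minimality of $S_v$ yields $S_v\subseteq C_i$. The case $b(v)_i=1$ has $v\notin C_i$, so intersection no longer works; here the plan is to invoke the posimodular inequality $f(A)+f(B)\ge f(A\setminus B)+f(B\setminus A)$, which holds for symmetric submodular $f$, with $A=S_v$ and $B=C_i$, noting that $S_v\setminus C_i$ is a feasible $v$-isolator and $C_i\setminus S_v$ is a feasible $(R_0^i,R_1^i)$-separator, to obtain $S_v\subseteq V\setminus C_i$.

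Once the claim is established, define $U_v$ as the intersection over all $i$ of the side of $C_i$ containing $v$. By construction the sets $U_v$ are pairwise disjoint with $\sum_v|U_v|\le|V|$, $v\in U_v$, and $R\cap U_v=\{v\}$ (any $u\in R\setminus\{v\}$ differs from $v$ on some bit and hence lies outside $U_v$). Since $S_v\subseteq U_v$ and $R\cap U_v=\{v\}$, the isolating set $S_v$ is precisely the inclusion-wise minimal minimizer of the submodular function $h_v(T):=f(T\cup\{v\})$ over $T\subseteq U_v\setminus\{v\}$, so one call to $\cA$ per $v\in R$ produces all isolating sets. These are the $|R|$ calls of item~(2), on pairwise disjoint ground sets whose total size is bounded by $|V|$.

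The main obstacle is the ``symmetric'' direction of the uncrossing step: when $v\notin C_i$, intersecting $S_v$ with $C_i$ shrinks it in the wrong way, and the argument must switch to the posimodular inequality --- which is precisely where the hypothesis that $f$ is symmetric is used (and why this generalization cannot be expected for arbitrary submodular $f$). Once both directions of the uncrossing are established, the rest of the proof is a mechanical translation of the graph-cut argument of \cite{LiP2020} to the symmetric submodular setting.
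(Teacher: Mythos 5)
Your proposal is correct and follows essentially the same route as the paper: the same bit-labelled bipartitions of $R$, the same uncrossing argument against each $C_i$, and the same disjoint refinement $U_v$ for the final $|R|$ calls. The only presentational difference is that you split the uncrossing into two cases and invoke posimodularity when $v\notin C_i$, whereas the paper always works with the side $C_i(v)$ of the bipartition containing $v$ (using symmetry of $f$ to argue that side is also optimal for the complementary constraints) --- since posimodularity for symmetric submodular $f$ is obtained precisely by complementing one argument of the submodular inequality, these are the same step.
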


Before proving \Cref{thm:sfm-isolating}, we first outline how to prove \Cref{thm:sfm-nontrivial} using \Cref{thm:sfm-isolating}.

\begin{proof}[Proof of \Cref{thm:sfm-nontrivial}]
We sample each element of the ground set $V$ with probability $1/k$. Let the sampled set of elements be $R$. We show the following claim.

\begin{claim} \label{clm:uniq-inclusion}
With constant probability, $R$ has exactly one element from the non-trivial minimizer of $f$.
\end{claim}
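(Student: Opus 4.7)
The plan is to reduce the claim to a direct binomial calculation. Let $T \subseteq V$ denote the (size $\le n/2$) non-trivial minimizer and write $t = |T|$. By the hypothesis that $k$ is a $1.5$-approximation of $t$, we have $t/k \in [2/3,\, 3/2]$. Letting $X = |R \cap T|$, the random variable $X$ is a sum of $t$ independent Bernoulli$(1/k)$ trials, so
\[
\Pr[X = 1] \;=\; t \cdot \frac{1}{k} \cdot \Paren{1 - \frac{1}{k}}^{t-1} \;=\; \frac{t}{k} \cdot \Paren{1 - \frac{1}{k}}^{t-1}.
\]

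The goal is then to lower-bound each factor by an absolute constant. The first factor satisfies $t/k \ge 2/3$ immediately from the approximation hypothesis. For the second factor, assuming $k \ge 2$ (the only non-trivial case), I would use the standard inequality $(1-x)^n \ge e^{-2nx}$ valid for $x \le 1/2$, to get
\[
\Paren{1 - \frac{1}{k}}^{t-1} \;\ge\; e^{-2(t-1)/k} \;\ge\; e^{-2(t/k)} \;\ge\; e^{-3},
\]
again using $t/k \le 3/2$. Putting these together yields $\Pr[X=1] \ge (2/3)\,e^{-3}$, which is a positive absolute constant. The small cases ($k < 2$, which forces $t \in \{1,2\}$ since $t \le 1.5k$) can be disposed of directly: when $t=1$ the single element of $T$ is included with probability $\min(1/k,1) \ge 2/3$, and when $t=2$ the probability of hitting exactly one is $2(1/k)(1-1/k)$, again a constant.

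I do not anticipate a genuine obstacle here; the only subtlety is remembering that the sampling probability must be capped at $1$ when $k < 1$ (which can only happen for very tiny minimizers), but since $k$ is a $1.5$-approximation to the size of a set containing at least one element, $k$ is bounded away from $0$ by an absolute constant and this is a non-issue. After the claim is established, the algorithm $\cB$ repeats the sampling procedure $\Theta(\log n)$ times (each time invoking \Cref{thm:sfm-isolating} on the sampled $R$ and returning the minimum-$f$ set obtained across all iterations) to boost the constant success probability to high probability, which accounts for the $O(\log n)$ factor in the call-complexity bound of \Cref{thm:sfm-nontrivial}.
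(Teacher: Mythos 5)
Your proposal is correct and takes essentially the same route as the paper: a direct binomial computation of $\Pr[|R\cap T|=1] = (t/k)\,(1-1/k)^{t-1}$, lower-bounded by an absolute constant using the fact that $t/k$ lies in a fixed interval around $1$. The paper's own version is looser (it writes $\approx$ and just plugs in $t = 2k/3$), so your explicit use of $(1-x)^n \ge e^{-2nx}$ and your handling of the small-$k$ edge cases only tightens the same argument.
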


To see why the claim is true, let us calculate the probability of choosing exactly one element from a set of size $k' = 2k/3$. The probability is at least

\begin{align*}
    k'\cdot \frac{1}{k}\Paren{1 - \frac 1 k}^{k'} \approx \frac 2 3 \cdot e^{- k'/k} = \frac 2 3\cdot e^{-\frac 2 3} > 0.
\end{align*}

We also need an additional claim which is as follows.

\begin{claim}\label{clm:inclusion}
With constant probability, $R$ includes at least one element from the complement set of the non-trivial minimizer of $f$.
\end{claim}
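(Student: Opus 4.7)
The plan is to prove this by a direct probability calculation, in the same spirit as the proof sketch of \Cref{clm:uniq-inclusion}, using two facts: (i) the non-trivial minimizer $M$ has size at most $n/2$, and (ii) $k$ is a $1.5$-approximation of $|M|$, so in particular $k \leq 1.5 \cdot |M|$.

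First I would lower-bound the complement: since $|M| \leq n/2$, we have $|V \setminus M| \geq n/2$. Next, I would upper-bound $k$ by combining the two facts: $k \leq 1.5 \cdot |M| \leq 1.5 \cdot n/2 = 3n/4$. These combine to give the uniform ratio bound $|V \setminus M|/k \geq (n/2)/(3n/4) = 2/3$, which holds regardless of the exact value of $|M|$.

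Since each element of $V$ is placed in $R$ independently with probability $1/k$, the probability that $R$ contains no element of $V \setminus M$ equals $(1 - 1/k)^{|V \setminus M|}$. Using $1 - x \leq e^{-x}$, this is at most $e^{-|V \setminus M|/k} \leq e^{-2/3}$. Hence $R$ includes at least one element of $V \setminus M$ with probability at least $1 - e^{-2/3} > 0.48$, a positive constant, as required.

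I do not expect any real obstacle. The one point worth noting is that the bound on $k$ must use \emph{both} the approximation quality \emph{and} the $n/2$ cap on $|M|$: the approximation alone only controls $k$ relative to $|M|$, and without the size cap $k$ could be as large as $\Theta(n)$ while $|V \setminus M|$ is small, making the ratio $|V \setminus M|/k$ arbitrarily close to zero. Finally, since the inclusions of elements of $M$ into $R$ and of elements of $V \setminus M$ into $R$ are independent events, \Cref{clm:uniq-inclusion} and the present claim hold simultaneously with constant probability, which is what the remainder of the proof of \Cref{thm:sfm-nontrivial} actually needs.
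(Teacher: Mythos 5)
Your proposal is correct and follows essentially the same route as the paper: bound $k\le 1.5\,|M|\le 3n/4$ using the $n/2$ cap on the minimizer, lower-bound the complement's size, and compute $1-(1-1/k)^{|V\setminus M|}\ge 1-e^{-2/3}$. Your version is in fact slightly cleaner, since you use $|V\setminus M|\ge n/2$ and the rigorous inequality $1-x\le e^{-x}$ where the paper settles for a weaker $n/4$ bound and an ``$\approx$''.
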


This is where we use the fact that the size of the non-trivial minimizer $k' \leq n/2$ which implies that $k \leq 3n/4$. This means that the complement set of the non-trivial minimizer has at least $n/4$ elements. The probability that $R$ includes at least one of them is at least
\[
1 - \Paren{1 - \frac 1 k}^{n/4} \approx 1 - e^{-\frac{n}{4k}}> 0.
\]

Note that the events of Claim \ref{clm:uniq-inclusion} and \ref{clm:inclusion} are independent of each other, and hence both of the occur simultaneously with constant probability. Repeating this sampling $O(\log n)$ times, we can get a sample $R$ such that it has (i) exactly one element from the non-trivial minimizer, and (ii) at least one element from the complement set of the non-trivial minimizer with very high probability. 

Let that unique element in $R$ be $v$. Once such a set $R$ is obtained, $\cB$ invokes \Cref{thm:sfm-isolating}. It is immediate that $S_v$ will be the non-trivial minimizer of $f$. So $\cB$ reports the $S_v$ that has the minimum value of $f(S_v)$ among the isolating sets.
\blikstad{I mostly agree with this argument, but there are some details missing. We need $|R| > 1$ for us to guarantee that we get a non-trivial minimizer (else we might obtain $S_v = V$). But in the case when $k = k' = n-1$ for example, the probability that both $|R\cap(\text{non-triv minimizer})| = 1$ and $|R| > 1$ is too small (something like $1/n$ ? ). \sout{One way to fix this is that we can always assume $k' \le n/2$ (as the complement to the non-trivial minimizer is also a minimizer by symmetry)}. Actually, even for $k' = n/2$ we might have the problematic case $k = n$, since $k$ is still indeed a constant approximation of $k'$ there.} \sagnik{Very nice catch. Fixed.}
\end{proof}

Towards proving \Cref{thm:sfm-isolating}, we define the notion of $s,t$-submodular function in the following way:

\begin{definition}[$st$-submodular function]
Given a submodular function $f$ and two disjoint subsets $S, T$ of $V$, the $st$-submodular function $f_{ST}$ w.r.t. to $f: 2^{V \setminus \{S \cup T\}} \to \bbR$ as follows: For any $A \subseteq V \setminus \{S \cup T\}$,
\begin{align*}
    f_{S, T}(A) = f(S \cup A).
\end{align*}
\end{definition}

 It is not hard to see that for any $S$ and $T$, $f_{S,T}$ is a submodular functions. Note that an oracle for any $f_{S,T}$ can be simulated by an oracle of $f$. Also, by setting $S = T = \emptyset$, we get back $f_{S,T} = f$. 


Now we are ready to prove the main theorem, the proof of which is very similar to that of Theorem II.2 in \cite{LiP2020}. The algorithm $\cB$ performs the following steps:
\begin{description}
\item[Step 1.\label{itm:step-1}] The algorithm $\cB$ considers $\log |R|$ many bipartition of the set $R$ where, in the $i$-th bipartition $(S_i, T_i)$ of $R$, all elements of $R$ with $i$-th entry in their binary representation being 0 are in $S_i$. The algorithm $\cB$ then finds minimizers for $f_{S_i,T_i}$ for each $i$ by calling $\cA$ $\log \Abs R$ many times.

Every such call to $\cA$ produces a bipartition of the ground set $V$ (of which one part is the corresponding minimizer itself). We denote by $C_i(v)$ to be the part in the bipartition that contains $v$. Let $U_v = \bigcap_i C_i(v)$.

\item[Step 2.]\label{itm:step-2} For each $U_v$, denote $T_v = V \setminus U_v$, and call $\cA$ to find the minimizer of $f_{\{v\},T_v}$. The minimizers found in this step are $\{S_v: v \in R\}$.
\end{description}

It is easy to see that Step 1 makes $\log |R|$ many calls to $\cA$ and Step 2 makes $\Abs R$ calls to $\cA$. For Step 2, note further that each element of the ground set is included in exactly one $U_v$. Hence $\cB$ satisfies both conditions in Theorem \ref{thm:sfm-isolating}. Towards showing the correctness of $\cB$, we make the following claim.

\begin{claim} \label{clm:correctness}
Consider the set $S_v$ from Definition \ref{def:isocut} which satisfies both Condition \ref{itm:uniq-intsec} and \ref{itm:minimizer}. Then $S_v \subseteq U_v$.
\end{claim}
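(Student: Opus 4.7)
The plan is to show, for each index $i$ used in Step 1, that $S_v \subseteq C_i(v)$; intersecting over $i$ then yields $S_v \subseteq U_v$. Fix $i$ and let $M_i$ denote the minimizer returned by $\cA$ on $f_{S_i,T_i}$; as a subset of $V$ this is really $S_i \cup M_i$, which contains $S_i$ and is disjoint from $T_i$. By symmetry of $f$, its complement $M'_i := V \setminus (S_i \cup M_i)$ contains $T_i$, is disjoint from $S_i$, and has the same $f$-value. I will treat the two cases $v \in S_i$ and $v \in T_i$ uniformly by replacing $M_i$ with $M'_i$ in the second case; in both cases the resulting ``target'' set $M$ satisfies $v \in M$ and $M \cap R = M \cap (S_i \cup T_i)$ is exactly the half of $R$ containing $v$, and $M$ is a minimum $f$-value set containing the relevant half of $R$ and avoiding the other half.

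Now the heart of the argument is a standard submodular uncrossing. Check that $S_v \cap M$ is still a valid candidate for the minimum isolating set around $v$: it contains $v$, and for any $u \in R \setminus \{v\}$, either $u$ lies on the ``wrong side'' of the bipartition (so $u \notin M$) or $u \in S_v \cap R = \{v\}$ is impossible. Similarly check that $S_v \cup M$ is a valid candidate for the $f_{S_i,T_i}$ (respectively $f_{T_i,S_i}$) minimization, because $S_v$ is disjoint from the forbidden half of $R$ and $M$ already contains the required half. Therefore $f(S_v \cap M) \ge f(S_v)$ and $f(S_v \cup M) \ge f(M)$. Adding and comparing with submodularity,
\[
f(S_v) + f(M) \;\ge\; f(S_v \cup M) + f(S_v \cap M) \;\ge\; f(S_v) + f(M),
\]
so equality holds throughout. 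In particular $f(S_v \cap M) = f(S_v)$, and $S_v \cap M$ is a subset of $S_v$ that is also a valid isolating set for $v$ with the same $f$-value. By the inclusion-wise minimality condition (Condition \ref{itm:minimizer}) of Definition \ref{def:isocut}, we must have $S_v = S_v \cap M$, i.e.\ $S_v \subseteq M = C_i(v)$.

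Taking the intersection over all $\log |R|$ choices of $i$ gives $S_v \subseteq \bigcap_i C_i(v) = U_v$, as claimed. The main subtlety is the case split on which side of the bipartition contains $v$: in the ``wrong side'' case one has to invoke the symmetry $f(A)=f(V\setminus A)$ to replace the returned minimizer by its complement before applying the uncrossing. Once that is done, the rest is a standard submodular crossing argument, and the use of inclusion-wise minimality (as opposed to merely minimum $f$-value) is essential for upgrading the equality $f(S_v\cap M) = f(S_v)$ to the set equality $S_v = S_v \cap M$.
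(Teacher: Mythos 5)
Your proof is correct and follows essentially the same route as the paper: the standard submodular uncrossing of $S_v$ with $C_i(v)$, using that $S_v \cap C_i(v)$ remains a valid isolating candidate, that $S_v \cup C_i(v)$ remains a valid candidate for the bipartition minimization, and that inclusion-wise minimality resolves the tie. The only cosmetic difference is that the paper phrases the uncrossing as a contradiction via a strict inequality, whereas you force equality throughout and then invoke minimality; your version is also somewhat more explicit about the case split on which side of $(S_i,T_i)$ contains $v$.
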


Note that, given Claim \ref{clm:correctness}, the correctness of the algorithm follows immediately as Step 2 correctly finds $S_v$ for all $v \in R$.  Next we prove Claim \ref{clm:correctness}.

\begin{proof}[Proof of Claim \ref{clm:correctness}]
Consider the $i$-th iteration in Step 1 of the algorithm. We show that $S_v \subseteq C_i(v)$ for all $i$. This implies that $S_v \subseteq U_v$ as $U_v = \bigcap_i C_i(v)$.

We start with the assumption that $S_v \not\subseteq C_i(v)$, and we show that this yields a contradiction. From the definition of submodularity, we know that
\[
f(S_v) + f(C_i(v)) \geq f(S_v \cup C_i(v)) + f(S_v \cap C_i(v)).
\]
Because $S_v \not\subseteq C_i(v)$, we have $S_v \cap C_i(v)$ is a strict subset of $S_v$. Also, note that $S_v \cap C_i(v)$ satisfies Condition \ref{itm:uniq-intsec}, i.e., it contains only the element $v$ from the set $R$. This, combined with the fact that $S_v$ is the minimizer of $f$ among all sets that satisfies Condition \ref{itm:uniq-intsec} and is also inclusion-wise minimial (Condition \ref{itm:minimizer}), we have $f(S_v) < f(S_v \cap C_i(v))$. So, for the above inequality to hold, it must be the case that $f(C_i(v)) > f(S_v \cup C_i(v))$.
\blikstad{We don't know that $f(S_v)$ is strictly less than $f(S_v\cap C_i(v))$, they could be equal? If I recall correctly, the original paper also assumes $S_v$ to be of minimum size (as a tie-breaker), and then we do indeed get the strict inequality.} \sagnik{Good point. Done.}

Because $f$ is a symmetric submodular function and $C_i(v)$ is obtained by calling $\cA$ on $f_{S_i,T_i}$, $C_i(v)$ is the minimizer of $f$ among all sets that contain $C_i(v)$ and does not contain $T_v$. \blikstad{\ldots and do not contain $T_i$?} That is $f(C_i(v)) \leq f(S_v \cup C_i(v))$ which is a contradiction.
\end{proof}

\section{Application I: SFM query complexity} \label{sec:sfm-query}

\begin{lemma}\label{lem:sfm-nontrivial-query}
The randomized query complexity of finding a non-trivial minimizer of a symmetric submodular function is $O(n^2 \log(nM)\log^3 n)$ where $M$ is the maximum absolute value of the function values.
\end{lemma}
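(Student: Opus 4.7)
The plan is to combine \Cref{thm:sfm-nontrivial} with the best known SFM subroutine in the query model, whose query complexity on a ground set of size $n'$ we denote by $Q(n') = O((n')^2 \log(n'M))$. Since we are not given a $1.5$-approximation $k$ of the size $k^\ast$ of the non-trivial minimizer, we will simply try all values $k \in \{1, 2, 4, \ldots, n\}$ and return the set of smallest $f$-value produced across the guesses. This adds only an $O(\log n)$ multiplicative overhead, and one of the guesses is guaranteed to be within a factor of $2$ of $k^\ast$, so \Cref{thm:sfm-nontrivial} applies for that guess.

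For any fixed guess $k$, the first bullet of \Cref{thm:sfm-nontrivial} contributes $O(\log n \log(n/k)) = O(\log^2 n)$ calls to $\cA$ on ground sets of size $O(n)$, costing $O(n^2 \log(nM) \log^2 n)$ queries. The second bullet gives $O(\log n)$ groups, each consisting of $O(n/k)$ calls to $\cA$ on pairwise disjoint ground sets $U_1, \ldots, U_{|R|}$ satisfying $\sum_j |U_j| \le n$. The cost of one such group is
\[
\sum_j Q(|U_j|) \;=\; O\!\left(\sum_j |U_j|^2 \log(|U_j| M)\right) \;\le\; O(n^2 \log(nM)),
\]
where the inequality uses $\sum_j |U_j|^2 \le (\sum_j |U_j|)^2 \le n^2$ together with $\log(|U_j| M) \le \log(nM)$. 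Summed over $O(\log n)$ groups, the second bullet contributes another $O(n^2 \log(nM) \log n)$ queries.

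Adding the two contributions for one value of $k$ gives $O(n^2 \log(nM) \log^2 n)$ queries, and multiplying by the $O(\log n)$ guesses of $k$ yields the desired $O(n^2 \log(nM) \log^3 n)$ bound. The randomized success guarantee of \Cref{thm:sfm-nontrivial} carries over directly, with the answer being identified as the guess producing the smallest $f$-value among the returned candidates.

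The only subtle step is the bound on the second group of calls: a naive accounting that treats each of the $|R| = \Theta(n/k)$ calls as costing $Q(n)$ would be a factor $\Theta(n/k)$ too large. The key is to exploit the disjointness property $\sum_j |U_j| \le |V|$ guaranteed by \Cref{thm:sfm-isolating}, combined with the superadditivity of $x \mapsto x^2$, to collapse the $|R|$ small calls into a total cost comparable to a single SFM call on the full ground set. Everything else is a routine geometric sum over the $O(\log n)$ guesses of $k$.
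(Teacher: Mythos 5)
Your proof is correct and follows essentially the same route as the paper's: geometrically guess $k$ over $O(\log n)$ values, invoke \Cref{thm:sfm-nontrivial} with the $O(n^2\log(nM))$-query SFM algorithm of Lee--Sidford--Wong, and use the disjointness $\sum_j |U_j| \le |V|$ together with superadditivity of $x \mapsto x^2\log(xM)$ to charge each group of small calls as a single full-size call, then return the candidate of smallest $f$-value. The only cosmetic differences are that the paper guesses $k$ as powers of $1.5$ rather than $2$ (either works, per the footnote to \Cref{thm:sfm-nontrivial}) and phrases the key accounting step as convexity of $n^2\log(nM)$ rather than spelling out $\sum_j |U_j|^2 \le (\sum_j |U_j|)^2$.
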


\begin{remark}
For sub-exponential $M$, Lemma \ref{lem:sfm-nontrivial-query} already improves upon $O(n^3)$ query algorithm of Queyranne \cite{Queyranne98}. However, if we want strongly polynomial (in $n$) query complexity, we get $O(n^3 \log^5 n)$ by plugging in $O(n^3 \log^2 n)$ query algorithm of \cite{LeeSW15}.
\end{remark}

\begin{proof}
This follows form \Cref{thm:sfm-nontrivial} and the result of \cite{LeeSW15}. The latter shows an algorithm for submodular function minimization that has query complexity $O(n^2 \log (nM))$. Call this algorithm as $\cA$. The query algorithm for finding non-trivial minimizer of $f$, denoted as $\cB$, works in $O(\log n)$ iterations where, in iteration $i$, \begin{itemize}
    \item $\cB$ sets $k = 1.5^i$,
    \item $\cB$ invokes \cref{thm:sfm-nontrivial} w.r.t. $\cA$.
\end{itemize}
In the end, $\cB$ outputs the non-trivial minimizer that has the smallest $f$ value among all $O(\log M)$ many minimizers.
\blikstad{Why $\log M$? should it not be $\log n$, as we only try powers of $k$ up to $n$ ($k$ is an approximation of the cardinality, and not the weight, if I understand correctly).} \sagnik{Done}

The correctness is immediate from that of \Cref{thm:sfm-nontrivial}. To see the complexity, let us analyze the complexity of one iteration. In iteration $i$, $\cB$ first makes $O(\log^2 n)$ calls to $\cA$ that takes $O(n^2 \log(nM) \log^2 n)$ queries. Then, $\cB$ also makes a number of calls to $\cA$, but on disjoint input space, $O(\log n)$ times. We claim that each such set of calls requires $O(n^2 \log(nM))$ queries. This follows simply because the calls are made on disjoint input space and the function $n^2 \log(nM)$ is a convex function. So the total query complexity of each iteration is $O(n^2 \log(nM) \log^2 n) + O(n^2 \log(nM) \log n) = O(n^2 \log(nM) \log^2 n)$. As there are $O(\log n)$ many iterations, the query upper bound follows.
\end{proof}

\section{Application II: Hypergraph min-cut} \label{sec:hg-mincut}

Finding min-cut in a hypergraph is an instance of SFM where the ground set is set the vertices $V$, and the submodular function is the cut function $C: 2^V \to {\mathbb R}$ such that $C(S) = \Abs{\Set{e=(u,v) \in E(G) \mid \text{either } u \in S \text{ or } v \in S}}$ \blikstad{that ``either \ldots or'' means \emph{exclusive or} can be made more clear}. It is not hard to see that this is a symmetric submodular function, and hence we are interested in finding the non-trivial minimizer of the the cut function.

Consider an algorithm $\cA$ which solves the $(s,t)$-min-cut on hypergraphs that has complexity $f(p, n)$ where $p$ is the representation size of, and $n$ is the number of vertices in the input hypergraph $G$. Using \Cref{thm:sfm-nontrivial}, we can infer the following:

\begin{lemma}
There is a randomized algorithm $\cB$ which solves hypergraph min-cut with $\polylog n$ calls to $\cA$ with complexity $\tO(f(O(n),O(p)))$ where $\tO(\cdot)$ hides polylogarithmic factors in terms of $n$.
\end{lemma}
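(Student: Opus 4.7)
The plan is to apply \Cref{thm:sfm-nontrivial} to the hypergraph cut function $C$, which is symmetric and submodular, and to simulate the SFM oracle it requires by the given $(s,t)$-min-cut algorithm $\cA$. Every call made inside \Cref{thm:sfm-nontrivial} is of the form ``minimize $C_{S,T}(A)=C(S\cup A)$ over $A\subseteq V\setminus(S\cup T)$'' for two disjoint, non-empty sets $S$ and $T$. This is exactly the $(s,t)$-min-cut problem on the hypergraph obtained from $G$ by contracting $S$ into a single vertex $s$ and $T$ into a single vertex $t$, so one invocation of $\cA$ implements one SFM oracle call. Since we have no \emph{a priori} estimate of the cardinality $k^*$ of a non-trivial minimizer, we wrap everything in the $O(\log n)$ outer loop from the proof of \Cref{lem:sfm-nontrivial-query}: for $i=0,1,\dots,O(\log n)$ set $k=1.5^i$, run \Cref{thm:sfm-nontrivial} with this $k$, and return the minimum-valued cut found; at least one guess will be a constant-factor approximation of $k^*$, so correctness is inherited from \Cref{thm:sfm-nontrivial}.

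The running-time analysis then follows the template of \Cref{lem:sfm-nontrivial-query}. \Cref{thm:sfm-nontrivial} issues two kinds of calls. The $O(\log n\log(n/k))$ ``global'' calls each run on a contracted hypergraph with at most $n$ vertices and representation size at most $p$, and thus each cost $f(p,n)$. The $O(\log n)$ remaining groups each consist of $|R|$ calls on pairwise disjoint vertex subsets $U_1,\dots,U_{|R|}\subseteq V$, where the call for $U_v$ runs on the hypergraph $G$ with $T_v=V\setminus U_v$ contracted to a single sink. The main obstacle will be to bound the total work inside a single group, because the same hyperedge of $G$ may reappear, in various contracted forms, in several of the $|R|$ sub-instances.

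The key observation is that in the contracted hypergraph for $U_v$, a hyperedge $e\in E(G)$ contributes exactly $|e\cap U_v| + \mathbbm{1}[e\not\subseteq U_v]$ to the representation size (the vertices of $e$ inside $U_v$, plus possibly the single merged sink $t_v$). Summing over $v$ yields
\[
\sum_{v\in R}\bigl(|e\cap U_v| + \mathbbm{1}[e\not\subseteq U_v]\bigr)\ \le\ |e|+|e|\ =\ 2|e|,
\]
so the total representation size inside a group satisfies $\sum_v p_v\le 2p$, and clearly $\sum_v n_v\le 2n$. Invoking the same convexity/monotonicity property of $f$ used in the proof of \Cref{lem:sfm-nontrivial-query}, the total cost of one group is at most $\widetilde{O}(f(p,n))$. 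Multiplying by the $O(\log n)$ groups and the $O(\log n)$ outer-loop guesses produces the claimed $\polylog n$ overhead on top of a single invocation of $\cA$ on $G$, which gives the lemma.
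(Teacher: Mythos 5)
Your proof is correct and follows essentially the same route as the paper: the same outer loop over guesses $k=1.5^i$, the same simulation of the SFM oracle by contracting $S$ and $T$ into single terminals $s$ and $t$, and, crucially, the same per-hyperedge accounting showing that a rank-$r$ hyperedge contributes $O(r)$ in total across the disjoint sub-instances $U_1,\dots,U_{|R|}$, so that a group's total representation size is $O(p)$ and its total vertex count is $O(n)$. The one point of divergence is how a group is executed. You keep the $|R|$ calls separate and appeal to superadditivity (``convexity'') of $f$ to bound $\sum_v f(p_v,n_v)$ by $\tO(f(p,n))$; the paper instead merges all $|R|$ instances into a \emph{single} $(s,t)$-min-cut instance by identifying all sources with one node $s$ and all sinks with one node $t$, then makes one call to $\cA$ on an instance with $O(n)$ vertices and $O(p)$ representation size. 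The merged-instance trick buys a little more: it requires only that $f$ behaves well under constant-factor blowup of its arguments, and it realizes the stated bound $\tO(f(O(n),O(p)))$ literally as the cost of individual calls. Your version needs the additional (mild, but unstated in the lemma) hypothesis that $\sum_v f(p_v,n_v)=O(f(\sum_v p_v,\sum_v n_v))$; note that convexity alone does not yield this without also assuming $f(0,0)\le 0$, though it holds for any reasonable running-time function and is exactly the step the paper itself takes in the proof of \Cref{lem:sfm-nontrivial-query}. Either way the conclusion stands.
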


Similar to the proof of \Cref{lem:sfm-nontrivial-query}, we invoke \Cref{thm:sfm-nontrivial} $O(\log n)$ times where, in the $i$-th iteration, we assume that $(1.5)^i$ is a constant approximation of the min-cut. As the size of the smaller side of the min-cut can be at most $n/2$, we need $O(\log n)$ iterations. 

In each iteration $i$, $\cB$ first makes $O(\log^2 n)$ many calls to $\cA$ corresponding to Step 1. The complexity of this step is $\tO(f(n,p))$. For Step 2, a better implementation is to make one call to $\cA$ where the input instance is all $(s,t)$-min-cut instances connected \textit{parallely} at $s$ and $t$: We identify source nodes of all $(s,t)$-min-cut instances by a single source node $s$ and, similarly, sink nodes of all $(s,t)$-min-cut instances by a single sink node $t$. The vertices and hyperedges of the   $(s,t)$-min-cut instances remain disjoint. From \Cref{thm:sfm-nontrivial}, we see that the number of vertices in this input instance is $O(n)$, even though the number of hyperedges can be large and one hyperedge can take part in many constituent $(s,t)$-min-cut instances. Nevertheless, note that in such an input instance, the contribution of any rank $r$ hyperedge to the representation size of the instance is $O(r)$. Hence the total representation size of the instance is $O(p)$ where $p$ is the representation size of the hypergraph $G$. Hence the complexity of Step 2 is $\tO(f(O(n), O(p)))$.

\section*{Acknowledgement}
We thank Joakim Blikstad for meticulous proof-reading. 
We thank Thatchaphol Saranurak for the early discussion on hypergraphs and the element connectivity problem.
We thank Jason Li and Debmalya Panigrahi for sharing the draft of \cite{LiP2020} and for many discussions. 
We also thank Calvin Beideman, Karthekeyan Chandrasekaran and Sorrachai Yingchareonthawornchai for discussions.

\bibliography{biblio}
\end{document}